\newtheorem{theorem}{Theorem}
\newtheorem{lem}{Lemma}
\newtheorem{prop}{Proposition}
\newcommand{\Reals}{\mathds R}
\newcommand{\bx}{\mathbf{x}}
\newcommand{\bw}{\mathbf{w}}
\newcommand{\bq}{\mathbf{q}}
\newcommand{\bp}{\mathbf{p}}
\renewcommand{\P}{\mathcal{P}}
\newcommand{\eps}{\varepsilon}
\newcommand{\X}{\mathcal{X}}
\newcommand{\C}{\mathcal{C}}
\newcommand{\ra}{\rightarrow}
\renewcommand{\max}{\mathrm{max}}
\long\def\symbolfootnote[#1]#2{\begingroup%
\def\thefootnote{\fnsymbol{footnote}}\footnote[#1]{#2}\endgroup}
\begin{document}

\title{\huge The Dispersion of Lossy Source Coding} 

\newcommand{\Types}{\mathcal{T}}
\newcommand{\ED}{\mathcal{E}\!(D)}
\newcommand{\prob}[1]{\Pr \Bigl\{ #1\Bigr\}}

\author{
\authorblockN{Amir Ingber}
\authorblockA{Dept. of EE-Systems,
TAU\\
Tel Aviv 69978, Israel\\
Email: ingber@eng.tau.ac.il}
\and
\authorblockN{Yuval Kochman}
\authorblockA{EECS Dept.,
MIT\\
Cambridge, MA 02139, USA \\
Email: yuvalko@mit.edu}
}

\maketitle

\begin{abstract}
In this work we investigate the behavior of the minimal rate needed in order to guarantee a given probability that the distortion exceeds a prescribed threshold, at some fixed finite quantization block length. We show that the excess coding rate above the rate-distortion function is inversely proportional (to the first order) to the square root of the block length. We give an explicit expression for the proportion constant, which is given by the inverse $Q$-function of the allowed excess distortion probability, times the square root of a constant, termed the \emph{excess distortion dispersion}.
This result is the dual of a corresponding channel coding result, where the dispersion above is the dual of the channel dispersion.
The work treats discrete memoryless sources, as well as the quadratic-Gaussian case.
\end{abstract}

\section{Introduction}\label{sec:intro}

Rate-distortion theory \cite{Berger_RateDistortionBook} tells us that in the limit of large block-length $n$, a discrete memoryless source (DMS) with distribution $\bp$ can be represented with some average distortion $D$ by a code of any rate greater than the rate-distortion function (RDF)
\begin{align} \label{eqn:RDF}
	R(\bp,D) = \min_{W:E_{\bp,W}[ d(X,\hat X)] \leq D} I(\bp,W),
\end{align}
where $d(x,\hat x)$ is the distortion measure, $W(\hat x|x)$ is any channel from the source to the reproduction alphabet and $I(\cdot,\cdot)$ denotes the mutual information.
However, beyond the \emph{expected} distortion, one may be interested in ensuring that the distortion for one source block is below some threshold. To that end, we define an \emph{excess distortion} event $\ED$ as
\begin{equation}\label{eqn:ED}
    \ED \triangleq \{ d(\bx,\hat\bx)>D \},
\end{equation}
where $d(\bx,\hat\bx) \triangleq \frac{1}{n}\sum_{i=1}^n d(x_i,\hat x_i)$ is the distortion between the source and reproduction words $\bx$ and $\hat \bx$.

A natural question to ask is how fast can the probability of such event be made to decay as a function of the block length. An asymptotic answer is given by Marton's excess distortion exponent  \cite{Marton1974fidelityCriterion}: for the best code of rate $R$,
\begin{align} \label{eqn:Marton_exponent}
\lim_{n\rightarrow\infty} - \frac{1}{n} \log \Pr\{\ED\} = \min_{\bq:R(\bq,D)\geq R} D(\bq\|\bp) \triangleq F(R,\bp,D),
\end{align}
assuming the limit exists. $D(\cdot\|\cdot)$ is the divergence between the two distributions.\footnote{Throughout the paper logarithms are taken with the natural base $e$ and rates are given in nats.} Intuitively speaking, this result means that, asymptotically, the error probability is governed by the first-order empirical statistics of the source sequence; if the sequence happens to be ``too rich'' to be quantized with rate $R$, en error (excess distortion event) will occur.

We are interested in the following related question:  for a given excess distortion probability $\eps$, what is the optimal (minimal) rate required to achieve it? This question is unanswered by Marton's exponent, and even the asymptotical behavior of the optimal rate is unknown.

A similar question can be asked in the context of channel coding: for a given error probability $\eps$, what is the maximal communication rate that can be achieved. Again, this question is unanswered by the channel error exponent \cite{GallagerInfoTheoryBook}. The asymptotics of the rate behavior was first studied in the 1960's \cite{Strassen62_Asymptotische} using the normal approximation. This result was recently tightened and extended to the Gaussian channel, along with nonasymptotic results, in a comprehensive work by Polyanskiy et al. \cite{PolyanskiyPVFiniteLength10}. In channel coding the maximal rate that can be achieved over a channel $W$ is approximately given by
\begin{equation}\label{eqn:channelDispersion}
    R \cong C(W) - \sqrt \frac{V(W)}{n} Q^{-1} (\eps), 
\end{equation}
where $C(W)$ is the channel capacity, $Q$ is the complementary Gaussian cumulative distribution function, and the quantity $V(W)$ is a constant that depends on the channel only, termed the channel dispersion. See \cite{PolyanskiyPVFiniteLength10} for details and more refinements of \eqref{eqn:channelDispersion}.

Our main result is the following. Suppose the source $\bp$ is to be quantized with distortion threshold $D$, and a fixed probability for excess distortion $\eps>0$. Then the minimal rate $R$ needed for quantization in blocks of length $n$ is given by

\begin{align} \label{eqn:R(D)dispersionGeneral}
    R \cong R(\bp,D) + \sqrt\frac{V(\bp,D)}{n} Q^{-1}(\eps),
\end{align}
where $V(\bp,D)$ is a constant which we call the \emph{excess distortion dispersion}, given in detail later on. We show that \eqref{eqn:R(D)dispersionGeneral} holds for any DMS under some smoothness conditions on $R(\bp,D)$, and for a Gaussian source with quadratic distortion measure, see Theorems \ref{thm:discrete} and \ref{thm:QG} respectively.

It is worth noting that that there is a large body of previous work regarding the redundancy of lossy source coding in related setting. However, these works are mostly concerned with two questions: the behavior of the word-length of variable-rate codes where the distortion should always be below some threshold (a.k.a. $D$-semifaithful codes) \cite{YuSpeed93_semifaithful}, or the average excess distortion of fixed-rate codes; see e.g. \cite{ZhangYangWei97},\cite{Kontoyiannis2000} and the references therein. We consider the excess-distortion probability, thus bridging between these works and the concepts of excess-distortion exponent and dispersion discussed above. In this context, the work by Kontoyiannis \cite{Kontoyiannis2000} is of special interest, since it introduces a constant which equals $V(\bp,D)$, see in the sequel.

\section{Main Result for Discrete Memoryless Sources}\label{sec:defsMainResult}

Let the source $X$ be drawn from an i.i.d. distribution $\bp$ over the alphabet $\X = \{1,...,L\}$, and let the reproduction alphabet be $\hat \X = \{1,...,K\}$. The distribution $\bp$ can be seen as a vector $\bp = [p_1,...,p_L]^T \in \P_L$, where $p_i=\Pr(X = i)$ and $\P_L$ is the probability simplex:
\begin{equation}
    \P_L \triangleq \left\{\bq \in \Reals^L |  q_i \geq 0 \forall i\in\{1..L\};\ \sum_{i=1}^Lq_i=1\right\}.
\end{equation}
Let $d:\X\times\hat \X  \ra \Reals^+$ denote a general nonnegative single-letter distortion measure, bounded by some finite $D_\max$. Denote the rate distortion function for the source $\bp$ and the distortion measure $d(\cdot,\cdot)$ at some level $D$ by $R(\bp,D)$. Whenever this function is differentiable w.r.t. its coordinates $p_i$, define the partial derivatives by
\begin{align} \label{eqn:RprimeDef}
	R'(i) \triangleq \left. \frac{\partial}{\partial q_i} R(\bq,D) \right|_{\bq=\bp}.
\end{align}
Note that $R'(i)$ implicitly depends on $\bp$ and $D$ as well. For a random source symbol $X$, we may look at $R'(i)$ as the values that a random variable $R'(X)$ takes.
Also note that in order to define the derivative, we extend the definition of the RDF $R(\bp,D)$ to general vectors in $(0,1)^L$ (cf. \cite[Theorem 2]{YuSpeed93_semifaithful}). In any case, we will only be interested in the value of this derivative for values of $\bp$ within the simplex, i.e. that represent probability distributions.

Let $\bx \in \X^n$ and $\hat\bx \in \hat\X^n$ denote the source and reproduction words respectively.
Recalling \eqref{eqn:ED}, let $R_{\bp,D,\eps}(n)$ be the optimal (minimal) code rate at length $n$ s.t. the probability of an excess distortion event $\ED$ is at most $\eps$.

It is known that $R_{\bp,D,\eps}(n) \ra R(\bp,D)$ as $n\ra\infty$. This can be deduced e.g. by Marton's excess distortion exponent \cite{Marton1974fidelityCriterion}. Our main result quantifies the rate of this convergence.

\begin{theorem} \label{thm:discrete}
A DMS with probability $\bp$ is to be quantized with distortion threshold $D$, block length $n$ and excess distortion probability $\eps$. Assume that $R(\bq,D)$ is differentiable w.r.t. $D$ and twice differentiable w.r.t. $\bq$ in some neighborhood of $(\bp,D)$. Then
\begin{equation}\label{eqn:R(D)boundTheorem1}
    R_{\bp,D,\eps}(n) = R(\bp,D) + \sqrt\frac{V(\bp,D)}{n} Q^{-1}(\eps) + O\left(\frac{\log n}{n}\right),
\end{equation}
where $V(\bp,D)$ is the \emph{excess distortion dispersion}, given by
\begin{align} \label{eqn:R(D)dispersion}
V(\bp,D) \triangleq \mathrm{Var}[R'(X)] = \sum_{i=1}^L p_i (R'(i))^2 - \left[\sum_{x=1}^L p_i R'(i)\right]^2.
\end{align}
\end{theorem}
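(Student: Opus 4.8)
The plan is to establish matching achievability (upper bound on $R_{\bp,D,\eps}(n)$) and converse (lower bound) results, each within $O(\log n / n)$ of the claimed expression. The central idea is that, by Marton's analysis, the excess distortion event is governed by the type of the source word: a source word $\bx$ of type $\bq$ can be quantized to within distortion $D$ with rate essentially $R(\bq,D)$, so the excess distortion probability at rate $R$ behaves like $\prob{R(\hat\bp_{\bx}, D) > R}$, where $\hat\bp_{\bx}$ is the empirical distribution of $\bx$. Since $\hat\bp_{\bx}$ concentrates around $\bp$ at scale $1/\sqrt n$, a Taylor expansion of $R(\cdot,D)$ around $\bp$ gives $R(\hat\bp_{\bx},D) \approx R(\bp,D) + \sum_i (\hat p_i - p_i) R'(i)$, and the multivariate CLT applied to the empirical distribution shows that the linear term $\sum_i (\hat p_i - p_i) R'(i) = \frac{1}{n}\sum_{j=1}^n (R'(x_j) - \EE[R'(X)])$ is asymptotically Gaussian with variance $\mathrm{Var}[R'(X)]/n = V(\bp,D)/n$. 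Setting $\prob{\,\text{Gaussian} > R - R(\bp,D)\,} = \eps$ and inverting $Q$ yields \eqref{eqn:R(D)boundTheorem1}.

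For the \textbf{achievability} direction, I would fix a target rate $R = R(\bp,D) + \sqrt{V(\bp,D)/n}\, Q^{-1}(\eps) + c\log n / n$ and construct a code as follows: enumerate all types $\bq$ of sequences in $\X^n$ with $R(\bq,D) \le R - O(\log n/n)$ (there are only polynomially many types), and for each such type union together a good fixed-composition / type-covering code (a standard type-covering lemma, as in Marton's work, produces for each type $\bq$ a codebook of rate $R(\bq,D) + O(\log n / n)$ that covers all sequences of type $\bq$ to within distortion $D$). Source words whose type has $R(\hat\bp_{\bx},D) > R - O(\log n/n)$ are simply not covered and contribute to the error. The excess distortion probability is then at most $\prob{R(\hat\bp_{\bx},D) > R - O(\log n/n)}$, and by the Taylor expansion plus a Berry–Esseen bound on the empirical sum $\frac{1}{\sqrt n}\sum_j (R'(x_j) - \EE[R'(X)])$, this probability is $\eps + O(1/\sqrt n)$; absorbing the correction into the $c\log n/n$ term via monotonicity of $Q^{-1}$ (using that the derivative of $Q^{-1}$ at $\eps$ is bounded and that a $1/\sqrt n$ shift in probability costs $O(1/\sqrt n)$ in rate, which is dominated by $\log n/n$ only if handled carefully — see below) closes the bound. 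The second-order term in the Taylor expansion, involving the Hessian of $R(\cdot,D)$, contributes a deterministic $O(1/n)$ bias, which is absorbed into the $O(\log n/n)$ error.

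For the \textbf{converse}, I would argue that any code of rate $R$ must fail on every source word $\bx$ whose type $\bq$ satisfies $R(\bq,D) > R + \delta_n$ for an appropriate $\delta_n = O(\log n / n)$: a rate-$R$ code has at most $e^{nR}$ codewords, and if $\hat\bp_{\bx}$ is sufficiently "rich," a counting / sphere-covering argument (again essentially Marton's converse, or a strong converse per type) shows no codeword can be within distortion $D$ of $\bx$. Hence the excess distortion probability is at least $\prob{R(\hat\bp_{\bx},D) > R + O(\log n/n)}$, and the same CLT/Berry–Esseen analysis forces $R \ge R(\bp,D) + \sqrt{V(\bp,D)/n}\,Q^{-1}(\eps) - O(\log n/n)$.

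The main obstacle I anticipate is not the Gaussian approximation per se — that is a routine Berry–Esseen argument once the reduction to types is in place — but rather \emph{controlling the per-type redundancy uniformly} and making the $O(\log n/n)$ bookkeeping honest. Specifically: (i) the type-covering lemma's $O(\log n/n)$ overhead must be shown uniform over all relevant types in a $1/\sqrt n$-neighborhood of $\bp$, which requires continuity/smoothness of $R(\cdot,D)$ there — this is exactly where the differentiability hypotheses are used; (ii) one must verify that the Taylor remainder is genuinely $O(1/n)$ uniformly, which needs the twice-differentiability and boundedness of the Hessian near $\bp$ (here the extension of $R(\bp,D)$ to a neighborhood in $(0,1)^L$ matters); and (iii) one must handle the degenerate case $V(\bp,D) = 0$ separately, where the $\sqrt{V/n}$ term vanishes and the rate redundancy is governed by the $O(\log n / n)$ term instead. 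A further subtlety is the differentiability of $R(\bq,D)$ with respect to $D$, which is needed to relate the "distortion-$D$ at type $\bq$" threshold to nearby distortion levels when patching the type-covering codes together; I would invoke the smoothness assumptions to linearize in $D$ as well.
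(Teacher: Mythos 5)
Your proposal follows essentially the same route as the paper: a union of type-covering codes over types $\bq$ with $R(\bq,D)$ below the target (a refined type-covering lemma giving $O(\log n/n)$ per-type redundancy), a per-type strong converse bounding the covered fraction of rich type classes, and a Taylor expansion of $R(\cdot,D)$ around $\bp$ combined with Berry--Esseen on $\frac{1}{n}\sum_j R'(x_j)$ to invert the probability into the $\sqrt{V(\bp,D)/n}\,Q^{-1}(\eps)$ term. The bookkeeping subtlety you flag in (the achievability step) is in fact benign: a perturbation of the target probability by $O(\log n/\sqrt n)$ shifts the argument of $Q^{-1}$ by $O(\log n/\sqrt n)$, which after multiplication by $\sqrt{V(\bp,D)/n}$ costs only $O(\log n/n)$ in rate, exactly as absorbed in the paper's Rate Redundancy Lemma.
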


This result is closely related to the following central-limit theorem (CLT) result of \cite{Kontoyiannis2000}. If we allow a code with variable rate $r(\bx)\triangleq l(\bx)/n$, where $l(\bx)$ is the length of the codeword needed to describe the source word $\bx$, then for the best code:
\begin{equation*}
    r(\bx) = R(\bp,D) + \frac{G_n}{\sqrt{n}} + O\left(\frac{\log n}{n}\right),
\end{equation*}
where $\{G_n\}$ converge in distribution to a Gaussian random variable of variance $V(\bp,D)$.\footnote{The variance has a different expression in \cite{Kontoyiannis2000}, we show in Section \ref{ssec:alternative} that the forms are equivalent.} If $G_n$ are exactly Gaussian, and then we truncate this variable-length code by assuming an excess-distortion event at each time that the length is over $nR$, then the excess distortion probability exactly satisfies the achievability bound of Theorem~\ref{thm:discrete}. However, this is not immediate, as one needs to take into account the rate of convergence of the sequence $\{G_n\}$.

We follow a different direction, which is closer in spirit to the derivation of the excess distortion exponent in \cite{Marton1974fidelityCriterion}. Specifically, we show that the $O(1/\sqrt{n})$ redundancy term comes only from the probability that the source will produce a sequence whose type is too complex to be covered with rate $R$.

The proof is based on the method of types. We adopt the notation of Csisz\'ar and K\"orner \cite{Csiszar81}: The \emph{type} of a sequence $\bx \in \X^n$ is the vector $P_\bx\in\P_L$ whose elements are the relative frequencies of the alphabet letters in $\X$. $\Types_n$ denotes all the types of sequences of length $n$. We say that a sequence $\bx$ has type $\bq \in \Types_n$ if $P_\bx = \bq$. The \emph{type class} of the type $\bq \in \Types_n$, denoted $T_\bq$, is the set of all sequences $\bx \in \X^n$ with type $\bq$.

For a reconstruction word $\hat \bx \in \hat\X$, we say that $\bx$ is $D$-covered by $\hat \bx$ if $d(\bx,\hat \bx) \leq D$.

\begin{prop}[Type covering] \label{prop:types}
Let $\bq\in\Types_n$ with a corresponding type class $T_\bq$. Let $A(\bq,\C,D)$ be the intersection of $T_{\bq}$ with the set of source sequences $\bx \in \X^n$ which are $D$-covered by at least one of the words in a codebook $\C$ with rate $R$ (i.e. $|\C|=e^{nR}$). Then:
\begin{enumerate}
\item If $|\partial R(\bq,D) / \partial D|$ is bounded in some neighborhood of $\bq$, then there exists a codebook $\C_\bq$ that completely $D$-covers $T_\bq$ (i.e. $A(\bq,\C_\bq,D)=T_\bq$), where for large enough $n$,
\begin{align}
    \frac{1}{n}\log|\C_\bq| = R \leq R(\bq,D) + J_1\frac {\log n}{n},
\end{align}
where $J_1=J_1(L,K)$ is a constant.
\item For any type $\bq \in \Types_n$ s.t. $R(\bq,D) > R$, the fraction of the type class that is $D$-covered by any code with rate $R$ is bounded by
    \begin{equation}
        \frac{|A(\bq,\C_n,D)|}{|T_\bq|} \leq \exp\left\{-n\left[R(\bq,D) - R+J_2\frac{\log n}{n}\right] \right\},
    \end{equation}
    where $J_2=J_2(L,K)$ is a constant.
\end{enumerate}
\end{prop}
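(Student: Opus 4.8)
The plan is to prove the two parts separately and essentially independently, both by the method of types; I would treat the converse (item~2) first since it is self-contained.

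\emph{Converse.} I would bound how many sequences of type $\bq$ a single reproduction word can $D$-cover. Fix $\hat\bx\in\hat\X^n$ and group the $\bx\in T_\bq$ with $d(\bx,\hat\bx)\le D$ by the joint type $V=V_{X\hat X}$ of $(\bx,\hat\bx)$. Each such $V$ has $X$-marginal $\bq$ and induces a channel $W(\hat x|x)=V(x,\hat x)/\bq(x)$ with $E_{\bq,W}[d]=E_V[d]\le D$, so $I_V(X;\hat X)=I(\bq,W)\ge R(\bq,D)$ and hence $H_V(X|\hat X)=H(\bq)-I_V(X;\hat X)\le H(\bq)-R(\bq,D)$. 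Since the $V$-shell of $\hat\bx$ has at most $e^{nH_V(X|\hat X)}$ elements and there are at most $(n+1)^{LK}$ joint types, $\hat\bx$ covers at most $(n+1)^{LK}e^{n(H(\bq)-R(\bq,D))}$ members of $T_\bq$; summing over the $e^{nR}$ codewords and using $|T_\bq|\ge(n+1)^{-L}e^{nH(\bq)}$ gives $|A(\bq,\C_n,D)|/|T_\bq|\le(n+1)^{L+LK}e^{-n(R(\bq,D)-R)}$, which is the stated bound with $J_2=-(L+LK)$.

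\emph{Achievability.} I would use random coding with expurgation. Let $W^\ast$ attain $R(\bq,D-c/n)$ for a suitable constant $c$, let $\hat\bq^\ast$ be its output marginal, choose $\hat\bq_n\in\Types_n$ with $\norm{\hat\bq_n-\hat\bq^\ast}_\infty\le1/n$, and draw $M=e^{nR}$ codewords i.i.d.\ uniformly from $T_{\hat\bq_n}$. A type-counting estimate shows that for any fixed $\bx\in T_\bq$, $\prob{d(\bx,\hat\bX)\le D}\ge(n+1)^{-LK}e^{-nR(\bq,D-c/n)-c_0}$ for a constant $c_0$: one rounds the joint law $\bq W^\ast$ to a joint type with $X$-marginal $\bq$ and $\hat X$-marginal $\hat\bq_n$, the $O(1/n)$ rounding slack in the distortion being absorbed by the $c/n$ margin. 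Hence the chance that no codeword $D$-covers $\bx$ is at most $\exp(-M(n+1)^{-LK}e^{-nR(\bq,D-c/n)-c_0})$, and a union bound over the $\le e^{n\log L}$ sequences of $T_\bq$ makes the expected number of uncovered sequences less than $1$ once $\tfrac1n\log M\ge R(\bq,D-c/n)+J_1'\tfrac{\log n}{n}$. Since $\abs{\partial R(\bq,D)/\partial D}$ is bounded near $\bq$, $R(\bq,D-c/n)=R(\bq,D)+O(1/n)$, so this rate equals $R(\bq,D)+J_1\tfrac{\log n}{n}$ for a suitable $J_1=J_1(L,K)$, and some realization of the codebook covers all of $T_\bq$. (A greedy deterministic covering argument gives the same conclusion.)

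\emph{Main obstacle.} The exponential type-counting estimates are routine; the delicate part is the bookkeeping of the $O(1/n)$ slacks in distortion and mutual information caused by rounding to $\Types_n$ — including checking that a joint type with the prescribed marginals lies within $O(1/n)$ of $\bq W^\ast$ — so that the total rate penalty is genuinely $O(\log n/n)$ with a constant depending only on $L,K$. This is also exactly where the boundedness hypothesis on $\partial R/\partial D$ enters: otherwise the $c/n$ distortion margin could inflate the rate by more than $O(\log n/n)$.
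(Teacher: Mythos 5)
Your proposal is correct in substance, but it takes a genuinely different route from the paper: the paper does not prove Proposition~\ref{prop:types} at all, it imports part~1 from the refinement of Berger's type-covering lemma in \cite{YuSpeed93_semifaithful} and part~2 as a corollary of \cite[Lemma 3]{ZhangYangWei97}, whereas you give a self-contained argument. Your part~2 is the standard counting argument: every joint type $V$ compatible with $\bq$ and satisfying the distortion constraint has $I_V(X;\hat X)\geq R(\bq,D)$, so each codeword covers at most $(n+1)^{LK}e^{n(H(\bq)-R(\bq,D))}$ sequences of $T_\bq$, and combining with $|T_\bq|\geq (n+1)^{-L}e^{nH(\bq)}$ gives the claim; the only nit is that $(n+1)^{L+LK}\leq n^{-J_2}$ forces $|J_2|$ slightly larger than $L+LK$ (e.g. $J_2=-2(L+LK)$ for $n\geq 2$), a trivial adjustment since the proposition only asserts some constant. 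Your part~1 (random codewords uniform on a type class approximating the optimal output marginal, the conditional-type lower bound on the covering probability, $\exp(-Mp)$ plus a union bound over $T_\bq$, then expurgation/realization) is the classical proof of the type-covering lemma and is sound; you correctly identify that the real work is the rounding bookkeeping (existence of a joint type with marginals exactly $\bq$ and $\hat\bq_n$ within $O(1/n)$ of $\bq W^\ast$, the $O(\log n/n)$ continuity of entropy/mutual information under such rounding, and the bounded $|\partial R(\bq,D)/\partial D|$ absorbing the $c/n$ distortion margin), and these steps do go through. What the paper's citation buys that your sketch should be careful about is uniformity: in the achievability proof of Theorem~\ref{thm:discrete} the proposition is applied to types $\bq^\ast$ that vary with $n$ inside the shrinking set $\Omega_n$, so the ``large enough $n$'' threshold and the constant $J_1$ must hold uniformly over a neighborhood of $\bp$; the Yu--Speed statement provides this, while in your argument the threshold enters through the derivative bound and through $W^\ast$, so you should note explicitly that your estimates are uniform over all $\bq$ in a compact neighborhood on which $|\partial R(\bq,D)/\partial D|$ is bounded.
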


The first part of this proposition is a refinement of Berger's type-covering lemma \cite{Berger_RateDistortionBook}, found in \cite{YuSpeed93_semifaithful}. The second part is a corollary of \cite[Lemma 3]{ZhangYangWei97}.
Both parts of the proposition are stronger versions than needed in \cite{Marton1974fidelityCriterion}, due to the non-exponential treatment of the excess distortion probability.\footnote{For the first part, Marton uses Berger's original lemma, while for the second part it is proved that the ratio between $|T_\bq|$ and $|A(\bq,\C,D)|$ is upper-bounded by a constant.} Equipped with this, the missing ingredient is an analysis of the relation between the rate $R$ and the probability of the source to produce a type which requires a description rate higher than $R$. It is given in the following lemma which is proved in Section \ref{sec:RateRedundancyLemma}.

\begin{lem}[Rate Redundancy]\label{lem:RateRedundancy} Consider a DMS $\bp$ and a distortion threshold $D$. Assume that $R(\bp,D)$ is differentiable w.r.t. $D$ and twice differentiable w.r.t. $\bp$ at some neighborhood of $(\bp,D)$. A random source word is denoted by $\bx$ and its type by $P_\bx$.
Let $\eps$ be a given probability and let $\Delta R$ be chosen s.t.
\begin{equation*}
    \Pr\{R(P_\bx,D)-R(\bp,D) > \Delta R\} = \eps.
\end{equation*}
Then, as $n$ grows,
\begin{equation}
    \Delta R = \sqrt \frac{V(\bp,D)}{n} Q^{-1}(\eps) + O\left(\frac{\log n}{n}\right),
\end{equation}
where $V(\bp,D)$ is given by \eqref{eqn:R(D)dispersion}.
The same holds even if we replace $\eps$ with $\eps + g_n$, as long $g_n = O\left(\frac{\log n}{\sqrt n}\right)$.
\end{lem}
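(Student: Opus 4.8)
The plan is to Taylor-expand the random variable $R(P_\bx,D)$ about the true distribution $\bp$ and to show that, up to an $O(\log n/n)$ error, the excess rate $R(P_\bx,D)-R(\bp,D)$ coincides with a centered i.i.d.\ sum to which a Berry--Esseen estimate applies. We may assume $p_i>0$ for every $i$ (otherwise restrict to the support of $\bp$). Since $R(\bq,D)$ is twice differentiable in $\bq$ on a neighborhood $\mathcal N$ of $\bp$, its gradient is Lipschitz on a compact sub-neighborhood, so whenever $P_\bx\in\mathcal N$,
\begin{equation*}
R(P_\bx,D)-R(\bp,D)=\sum_{i=1}^L R'(i)\bigl(P_\bx(i)-p_i\bigr)+\rho_n,\qquad |\rho_n|\le K\,\|P_\bx-\bp\|_2^2,
\end{equation*}
with $K$ depending only on $\bp$ and $D$. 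The key observation is that the linear term is exactly a centered i.i.d.\ sum: writing $\mu\triangleq\sum_i p_i R'(i)=\EE[R'(X)]$ and $S_n\triangleq\sum_{j=1}^n\bigl(R'(x_j)-\mu\bigr)$, and recalling that $P_\bx(i)$ is the relative frequency of letter $i$ in $\bx$, we have $\sum_i R'(i)\bigl(P_\bx(i)-p_i\bigr)=S_n/n$. Here $R'(X)$ is bounded (the gradient of $R(\cdot,D)$ at $\bp$ is a fixed finite vector), so $S_n$ has a finite third absolute moment and $\mathrm{Var}[S_n/\sqrt n]=\mathrm{Var}[R'(X)]=V(\bp,D)$.

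Next I would control the remainder $\rho_n$ via concentration of the type. Let $r_n\triangleq\sqrt{\log n/n}$ and let $A$ be the event $\{\|P_\bx-\bp\|_\infty\le r_n\}$; by Hoeffding's inequality applied to each coordinate and a union bound over the $L$ letters, $\Pr\{A^c\}\le 2Ln^{-2}=O(n^{-2})$. For $n$ large, $A$ forces $P_\bx\in\mathcal N$, and on $A$ we get $|\rho_n|\le KL\,r_n^2=O(\log n/n)$; denote this bound by $K'\log n/n$. Hence for any threshold $\theta$,
\begin{equation*}
\bigl\{S_n/n>\theta+K'\tfrac{\log n}{n}\bigr\}\cap A\ \subseteq\ \bigl\{R(P_\bx,D)-R(\bp,D)>\theta\bigr\}\cap A\ \subseteq\ \bigl\{S_n/n>\theta-K'\tfrac{\log n}{n}\bigr\}\cap A,
\end{equation*}
which together with $\Pr\{A^c\}=O(n^{-2})$ sandwiches $\Pr\{R(P_\bx,D)-R(\bp,D)>\theta\}$ between $\Pr\{S_n/n>\theta\pm K'\log n/n\}\mp O(n^{-2})$.

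Finally I would invoke the Berry--Esseen theorem for $S_n$. Assuming $V(\bp,D)>0$ (if $V(\bp,D)=0$ then $R'(X)$ is a.s.\ constant, $S_n=0$, and the excess rate concentrates within $O(\log n/n)$ of $0$, so the claim holds by a direct, simpler argument), Berry--Esseen gives $\Pr\{S_n/n>t\}=Q\!\bigl(t\sqrt{n/V(\bp,D)}\bigr)+O(1/\sqrt n)$ uniformly in $t$. Applying this with $t=\Delta R\pm K'\log n/n$ and using that $Q$ is $1/\sqrt{2\pi}$-Lipschitz, the sandwich yields $\eps=Q\!\bigl(\Delta R\sqrt{n/V(\bp,D)}\bigr)+O(\log n/\sqrt n)$. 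Since $\eps\in(0,1)$ is fixed, $Q^{-1}$ is Lipschitz near $\eps$, so $\Delta R\sqrt{n/V(\bp,D)}=Q^{-1}(\eps)+O(\log n/\sqrt n)$, i.e.\ $\Delta R=\sqrt{V(\bp,D)/n}\,Q^{-1}(\eps)+O(\log n/n)$. The extension to $\eps+g_n$ with $g_n=O(\log n/\sqrt n)$ is immediate: it merely perturbs the right-hand side of this equation by a further $O(\log n/\sqrt n)$, absorbed by the same inversion. The main obstacle is the uniform remainder control in the middle step: one must guarantee that the post-linear Taylor error is genuinely $O(\log n/n)$ on an event of probability $1-O(n^{-2})$, which is precisely why twice-differentiability is assumed on a whole neighborhood (yielding a locally Lipschitz gradient) rather than only at the single point $(\bp,D)$; everything after that is a routine CLT-inversion.
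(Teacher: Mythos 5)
Your proposal is correct and follows essentially the same route as the paper: a Taylor expansion of $R(P_\bx,D)$ about $\bp$, identification of the linear term as a centered i.i.d.\ sum of $R'(x_k)$, control of the quadratic remainder on a high-probability type-concentration event of probability $1-O(n^{-2})$, and a Berry--Esseen estimate followed by inversion of $Q$ near the fixed $\eps$. The only differences are cosmetic: you prove the concentration step directly via Hoeffding plus a union bound (the paper delegates it to its omitted Lemma~\ref{lem:annoyingLemma}) and you explicitly dispose of the degenerate case $V(\bp,D)=0$, which the paper leaves implicit.
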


\begin{proof}[Proof of Theorem \ref{thm:discrete}]
\emph{Achievability part}.

Let $\Delta R>0$. We construct a code $\C$ as follows. The code shall consist of the union of the codes that cover all the types $\bq \in \Phi(n,D,\Delta R)$, where
\begin{equation}\label{eqn:CoveredTypes}
    \Phi(n,D,\Delta R) = \{\bq: R(\bq,D) \leq R(\bp,D) + \Delta R\} \cap \Omega_n,
\end{equation}
where $\Omega_n = \left\{\bq: \|\bp - \bq\|^2 \leq L \frac{\log n}{n} \right\}$.

\begin{lem}\label{lem:annoyingLemma} For a source word $\bx$ drawn from the $\bp$, we have $ \Pr\{P_\bx \notin \Omega_n\} \leq \frac{2L}{n^2}$.
\end{lem}
The proof for this technical lemma is omitted. It can be proved using techniques similar to those in \cite[Theorem 2]{YuSpeed93_semifaithful}.

The size of the code is bounded by
\begin{align}
    |\C| &\leq \sum_{\bq \in \Phi(n,D,\Delta R)} |\C_\bq| \leq |\Types_n| | \C_{\bq^*}| \leq (n+1)^L | \C_{\bq^*}|,
\end{align}
where $\bq^*$ is the largest type class that is covered.

Since we assumed that $R(\bp,D)$ is differentiable w.r.t. $D$ at $\bp$, the derivative is bounded over any small enough neighborhood of $\bp$. In particular, it is bounded over $\Omega_n$ for large enough $n$, thus for all types covered by the codebook. We can thus apply part 1 of Proposition~\ref{prop:types} and we get a bound on the rate:
\begin{align}
    R = \frac{1}{n} \log |\C| \leq & \frac{L}{n} \log (n+1) +  \frac{1}{n}\log|\C_{\bq^*}|\\
    \leq & R(\bp,D) + \Delta R +O\left(\frac{\log n}{n}\right).\label{eqn:rateWithDeltaR}
\end{align}

Since we completely cover all the types in $\Phi(n,D,\Delta R)$, we have that the probability of excess distortion \eqref{eqn:ED} satisfies
\begin{align}
\Pr\{\ED\}=&\prob{P_\bx \notin \Phi(n,D,\Delta R)} \nonumber \\
\leq &  \prob{ R(P_\bx,D) \leq R(\bp,D) + \Delta R} + \Pr\{P_\bx \notin \Omega_n\} \label{eqn:union_in_direct} \\
\leq &  \prob{ R(P_\bx,D) \leq R(\bp,D) + \Delta R} + \frac{2L}{n^2}. \label{eqn:annoying_in_direct}
\end{align}
where \eqref{eqn:union_in_direct} follows from the union bound, and \eqref{eqn:annoying_in_direct} is justified by Lemma \ref{lem:annoyingLemma}.

We select $\Delta R$ s.t. the probability for $\{R(P_\bx,D)>R(\bp,D)+\Delta R\}$ is exactly $\eps -  \frac{2L}{n^2}$, and get a code with excess distortion probability at most $\eps$. By Lemma~\ref{lem:RateRedundancy} we have
\begin{equation*}
    \Delta R = \sqrt \frac{V(\bp,D)}{n} Q^{-1}(\eps) + O\left(\frac{\log n}{n}\right),
\end{equation*}
and by plugging into \eqref{eqn:rateWithDeltaR} the rate $R$ is bounded by the RHS of \eqref{eqn:R(D)boundTheorem1}, as required.

\emph{Converse part}.

Let $\C$ be a code with rate $R$, and suppose that its excess distortion probability is $\eps$. Our goal is to lower bound $\Delta R = R - R(\bp,D)$.

Again, the source word is $\bx$ and its type is $P_\bx$. The following holds for any $\Psi$:
\begin{align}
    \eps = \Pr\{\ED\} =& \prob{\ED|R(P_\bx,D) \leq R+\Psi} \prob{R(P_\bx,D) \leq R+\Psi} \nonumber\\
    &+\prob{\ED|R(P_\bx,D) > R+\Psi} \prob{R(P_\bx,D) > R+\Psi}\nonumber\\
    \geq& \prob{\ED|R(P_\bx,D) > R+\Psi} \prob{R(P_\bx,D) > R+\Psi}.\label{eqn:eps_LB}
\end{align}

Take a type $\bq \in \Types_n$, and assume that $R(\bq,D) > R + \Psi$.
By the second part of Proposition~\ref{prop:types}, the fraction of the type class $T_\bq$ that is covered by the code $\C$ is at most
\begin{align}
    \exp\left\{-n\left[R(\bq,D) - R+J_2\frac{\log n}{n}\right] \right\}\leq \exp\left\{-n\Psi +J_2\log n \right\}
\end{align}
By setting $\Psi = (J_2 + 1)\frac{\log n}{n}$ we get that the fraction is bounded by $1/n$. Since the source sequences within a given type are uniformly distributed, we get that the probability of covering a sequence from a type that its $R(P_\bx,D)$ is too high is at most $1/n$. We therefore have
\begin{align}
    \eps &\geq \left(1-\frac{1}{n}\right)\prob{R(T_\bx,D) > R+\Psi}\nonumber\\
    &\geq\frac{1}{1+\frac{2}{n}}\prob{R(T_\bx,D) > R+\Psi},\label{eqn:Converse_eps_LB1}
\end{align}
where the last inequality follows since $1-x \geq \frac{1}{1-2x}$ for all $x \in [0,1/2]$.


We rewrite \eqref{eqn:Converse_eps_LB1} and get that $\Delta R$ must satisfy
\begin{equation}
    \eps\left(1+\frac{2}{n}\right) \geq \prob{R(T_\bx,D)-R(\bp,D) > \Delta R +\Psi}.
\end{equation}
By Lemma \ref{lem:RateRedundancy} and the fact that $\Psi = O\left(\frac{\log n}{n}\right)$, we get
\begin{equation}
    \Delta R \geq \sqrt \frac{V(\bp,D)}{n} Q^{-1}(\eps) + O\left(\frac{\log n}{n}\right),
\end{equation}
as required.
\end{proof}
\section{Excess Distortion Dispersion: Properties and Evaluation}\label{sec:properties}

\subsection{Differentiability of the RDF}

In the results above, we assumed differentiability of the RDF $R(\bp,D)$ with respect to $D$ (once) and $\bp$ (twice). In general, the RDF is not differentiable w.r.t. either. However, it is differentiable ``almost always'' in the following sense. Let $K'(\bp,D)$ be the ``effective reproduction alphabet size'', i.e., the number of reproduction letters of positive probability for the channel minimizing \eqref{eqn:RDF}. Then, if $K'(\bp,D)$ is constant in a neighborhood of $D$, then $R(\bp,D)$ is differentiable w.r.t. $D$ and twice differentiable w.r.t. $\bp$ at that point.

When keeping $\bp$ fixed and changing $D$, such points may represent ``jumps'' in the excess distortion dispersion $V(\bp,D)$. In these points, we can not specify the exact behavior of the excess rate, but careful derivation should verify that it is between $V(\bp,D^-)$ and $V(\bp,D^+)$. However, in the process we will encounter at most $L-2$ such points.

\subsection{Alternative Representations}\label{ssec:alternative}

The evaluation of the the excess distortion dispersion seems to be a difficult task, as it involves derivatives of the RDF w.r.t. the source distribution. However, we have the following alternative representations.

First we connect the dispersion to the excess-distortion exponent \eqref{eqn:Marton_exponent}, much in the same way that the channel dispersion constant is related to the channel error exponent; See \cite{PolyanskiyPVFiniteLength10} for details on the early origins of this approximation by Shannon.

\begin{prop} \label{prop:dispersion-exponent}
If $R(p,D)$ is differentiable at distortion level $D$, then \[ V(p,D) = \left[ \left. \frac{\partial^2}{\partial R^2} F(R,p,D) \right|_{R=R(p,D)} \right]^{-1}.\]
\end{prop}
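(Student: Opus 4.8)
The plan is to show that $F(R,\bp,D)$, as a function of $R$ near $R=R(\bp,D)$, behaves like a parabola with leading coefficient $1/(2V(\bp,D))$, so that its second derivative there equals $1/V(\bp,D)$. Recall $F(R,\bp,D) = \min_{\bq: R(\bq,D)\geq R} D(\bq\|\bp)$. At $R=R(\bp,D)$ the minimizer is $\bq=\bp$ and $F=0$, and since $D(\bq\|\bp)\geq 0$ with equality only at $\bq=\bp$, this is also a global minimum of $F$ in $R$; hence $\partial F/\partial R = 0$ there as well, and the local behavior is governed by the second-order term. The first step is therefore to set up the constrained minimization with a Lagrange multiplier: for $R$ slightly above $R(\bp,D)$, the constraint $R(\bq,D)\geq R$ is active, so the optimal $\bq=\bq^*(R)$ satisfies $\nabla_\bq D(\bq\|\bp) = \lambda\, \nabla_\bq R(\bq,D)$ together with $R(\bq^*,D)=R$, where the gradients are taken on the simplex (i.e.\ modulo the constraint $\sum q_i=1$).

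Next I would perform a Taylor expansion of both $D(\bq\|\bp)$ and $R(\bq,D)$ about $\bq=\bp$ along the optimal path $\bq^*(R)$. Write $\bq^* = \bp + \bv$ with $\bv\to 0$ as $R\to R(\bp,D)$. Since $D(\bp\|\bp)=0$ and $\nabla_\bq D(\bq\|\bp)|_{\bq=\bp}=0$, we get $D(\bq^*\|\bp) = \tfrac12 \bv^T H \bv + o(\|\bv\|^2)$, where $H$ is the Hessian of the divergence at $\bp$, namely the diagonal matrix $\mathrm{diag}(1/p_i)$ (the Fisher information of the simplex). Meanwhile $R(\bq^*,D) = R(\bp,D) + \sum_i R'(i) v_i + O(\|\bv\|^2)$, so to leading order the active constraint reads $\sum_i R'(i) v_i = \Delta R$ where $\Delta R = R - R(\bp,D)$. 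The second step is thus to solve the quadratic program: minimize $\tfrac12 \sum_i v_i^2/p_i$ subject to the linear constraints $\sum_i R'(i) v_i = \Delta R$ and $\sum_i v_i = 0$ (so that $\bq^*$ stays on the simplex). This is a standard Lagrange computation whose optimal value is $\tfrac12 (\Delta R)^2 / \big(\sum_i p_i R'(i)^2 - (\sum_i p_i R'(i))^2\big) = (\Delta R)^2/(2V(\bp,D))$, precisely because the two constraints interact so that the relevant quantity is the $\bp$-variance of $R'(X)$ rather than its second moment. Hence $F(R,\bp,D) = (R-R(\bp,D))^2/(2V(\bp,D)) + o((R-R(\bp,D))^2)$, and differentiating twice at $R=R(\bp,D)$ gives the claim.

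The main obstacle is making the expansion rigorous: one must justify that the optimal $\bq^*(R)$ is indeed interior to the simplex for $R$ close enough to $R(\bp,D)$ (so that the only binding constraints are $\sum q_i =1$ and the RDF constraint, and the unconstrained Taylor expansion of $D(\cdot\|\bp)$ applies with the diagonal Hessian), that $\|\bq^*(R)-\bp\| = O(\Delta R)$ so the remainder terms are genuinely $o((\Delta R)^2)$, and that the $O(\|\bv\|^2)$ correction to the constraint $R(\bq^*,D)=R$ does not affect the leading coefficient — this uses the assumed twice-differentiability of $R(\cdot,D)$ in a neighborhood of $\bp$. A clean way to handle all of this at once is the implicit function theorem: under the differentiability hypotheses the map $R\mapsto \bq^*(R)$ is $C^1$ near $R(\bp,D)$ with $\bq^*(R(\bp,D))=\bp$, which immediately yields the required linear bound $\|\bq^*(R)-\bp\|=O(\Delta R)$ and legitimizes term-by-term differentiation of the value function. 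One should also note the edge case flagged in Section \ref{sec:properties}: this argument needs $R(\bp,D)$ differentiable at the level $D$ in question, exactly the hypothesis of the proposition.
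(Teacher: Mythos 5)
Your plan is correct and is essentially the paper's own (omitted) argument: the paper states only that the proof ``follows by directly considering the exponent definition \eqref{eqn:Marton_exponent} in the limit of small excess rate,'' and your local quadratic expansion of $D(\bq\|\bp)$ with Hessian $\mathrm{diag}(1/p_i)$, linearized active constraint $\sum_i R'(i)v_i=\Delta R$ on the simplex, and the resulting value $F\approx(\Delta R)^2/(2V(\bp,D))$ is exactly that computation, yielding the claimed inverse second derivative (understood, as you implicitly do, as the one-sided derivative from positive excess rate, since $F\equiv 0$ for $R\le R(\bp,D)$).
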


The proof, not included in this version, follows by directly considering the exponent definition \eqref{eqn:Marton_exponent} in the limit of small excess rate.

We further show equivalence to the variance of the excess rate in \cite{Kontoyiannis2000}, which is close in spirit to the dispersion as discussed in Section \ref{sec:defsMainResult}:
\begin{prop} \label{prop:Kont}
If $R(\bp,D)$ is differentiable at distortion level $D$, then $V(p,D)=\mathrm{Var}[f(X)]$ where
\[ f(i) = -\log E_{\hat X} \exp\{-\lambda[d(x_i,\hat x)-D]\}, \]
where the expectation is taken according to the reproduction distribution induced by the channel minimizing \eqref{eqn:RDF} for $\bp$ and $D$, and $\lambda = \partial R(\bp,D) / \partial D$ at that point. \end{prop}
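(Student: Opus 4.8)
The plan is to reduce the claim to the elementary observation that $f$ and the RDF-gradient function $R'$ of \eqref{eqn:RprimeDef} differ by an additive constant (independent of the source symbol); since a variance is insensitive to additive constants, $\mathrm{Var}[f(X)]=\mathrm{Var}[R'(X)]=V(\bp,D)$ then follows from \eqref{eqn:R(D)dispersion}. It is worth noting at the outset that $\mathrm{Var}[R'(X)]$ is in any case well defined independently of the particular smooth extension of $R(\cdot,D)$ off the simplex used in \eqref{eqn:RprimeDef}, since any two such extensions have gradients that differ by a multiple of the all-ones vector, i.e.\ by a symbol-independent constant. The main tool is the parametric (Lagrange-dual) representation of the RDF.

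First I would record the dual form. Attaching a multiplier $\lambda\ge 0$ to the constraint $E_{\bq,W}[d]\le D$ in \eqref{eqn:RDF} and invoking strong duality (the RDF is a convex program in $W$ with a single affine constraint, and Slater's condition holds whenever some channel attains expected distortion strictly below $D$), one gets
\[
  R(\bq,D)=\max_{\lambda\ge 0}\ \min_{r\in\P_K}\ \Bigl[\ \sum_{i=1}^L q_i\,g_\lambda(i,r)\ -\ \lambda D\ \Bigr],\qquad
  g_\lambda(i,r)\triangleq-\log\sum_{\hat x}r(\hat x)\,e^{-\lambda d(i,\hat x)},
\]
where the inner identity $\min_{W(\cdot\mid i)}\sum_{\hat x}W(\hat x\mid i)\log\frac{W(\hat x\mid i)}{r(\hat x)e^{-\lambda d(i,\hat x)}}=g_\lambda(i,r)$ is attained by the classical test channel $W^*(\hat x\mid i)\propto r(\hat x)e^{-\lambda d(i,\hat x)}$. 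This is exactly the formula that extends $R(\cdot,D)$ to vectors in $(0,1)^L$, and at the optimum $\lambda$ equals the slope magnitude $\bigl|\partial R(\bp,D)/\partial D\bigr|$ while $r=q^*$ is the reproduction distribution induced by the minimizing channel. Note that the objective is concave in $\lambda$ and convex in $r$, so this is a genuine saddle point.

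Next I would differentiate in $\bq$. The bracketed objective $\phi(\bq,\lambda,r)=\sum_i q_i g_\lambda(i,r)-\lambda D$ is affine in $\bq$, hence $\partial\phi/\partial q_i=g_\lambda(i,r)$ identically. Since $R(\bq,D)=\max_\lambda\min_r\phi(\bq,\lambda,r)$ and, under the differentiability hypothesis, the saddle point $(\lambda,q^*)$ is unique and varies continuously for $\bq$ near $\bp$ (in particular $q^*$ stays in the relative interior of the face of $\P_K$ spanned by the effective reproduction letters, where $g_\lambda(i,\cdot)$ is smooth), the envelope theorem yields
\[
  R'(i)=\Bigl.\tfrac{\partial}{\partial q_i}R(\bq,D)\Bigr|_{\bq=\bp}=g_\lambda(i,q^*)=-\log\sum_{\hat x}q^*(\hat x)\,e^{-\lambda d(i,\hat x)}.
\]
Comparing, $f(i)=-\log E_{\hat X}\exp\{-\lambda[d(x_i,\hat x)-D]\}=-\log\bigl(e^{\lambda D}\sum_{\hat x}q^*(\hat x)e^{-\lambda d(i,\hat x)}\bigr)=-\lambda D+R'(i)$, which is a symbol-independent shift of $R'(i)$; therefore $\mathrm{Var}[f(X)]=\mathrm{Var}[R'(X)]=V(\bp,D)$.

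The step I expect to be the main obstacle is the rigorous justification of the envelope argument: establishing that the optimal pair $(\lambda,q^*)$ is unique and depends continuously on $\bq$ near $\bp$, so that the first-order perturbations of $\phi$ through the optimizers vanish, together with the care needed when $q^*$ is supported on a strict subset of $\hat\X$. Both points are controlled by the assumed differentiability of $R(\bp,D)$, equivalently by the constancy of the effective reproduction-alphabet size near $(\bp,D)$ discussed in Section~\ref{sec:properties}; a minor bookkeeping item is reconciling the sign convention for $\lambda$ in the statement with the $\lambda\ge 0$ used above. An alternative derivation would combine Proposition~\ref{prop:dispersion-exponent} with an explicit evaluation of Marton's exponent $F(R,\bp,D)$ near $R=R(\bp,D)$ via Lagrangian duality, but the route above is more direct.
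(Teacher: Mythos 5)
Your argument is correct and follows essentially the same route as the paper: both go through the parametric (Lagrangian) representation $R(\bq,D)=\sum_i q_i f(i)$ and show that, upon differentiating in $\bq$, the implicit dependence of the optimizers ($\lambda$ and the induced reproduction distribution) contributes only a symbol-independent term, so $f(i)$ and $R'(i)$ differ by an additive constant and hence have the same variance. Your Danskin/envelope packaging, together with the useful remark that $R'$ is in any case defined only up to a constant shift coming from the choice of off-simplex extension, simply makes explicit what the paper compresses into ``straightforward derivation shows that the term \ldots is constant in $i$.''
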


This form is especially appealing, since it can also be shown that $R(\bp,D)=E\{f(X)\}$, thus presenting the dispersion as a ``second-order RDF''. The equivalence can be proven by starting from the RDF presentation above. Applying \eqref{eqn:R(D)dispersion},
\[ V(\bp,D) =  \mathrm{Var}  \left\{ \left. \frac{\partial}{\partial q_i} \sum_{j=1}^L q_j f(j) \right|_{\bq=\bp} \right\} =
\mathrm{Var}  \left\{ f(i) + \sum_{j=1}^L p_j \cdot \left. \frac{\partial f(j)}{\partial q_i}\right|_{\bq=\bp} \right\} . \] Straightforward derivation shows that the term to the right of the addition in the last form is constant in $i$, thus it does not effect the variance, as required.

\subsection{Some Special Cases}

In some cases the evaluation may be simplified, as follows.

\begin{enumerate}
\item \textbf{Zero distortion}. Whenever $R(\bp,0) = H(\bp)$, we have \[ R'(i) = \left. \frac{\partial}{\partial q_i} H(q)\right|_{q=p} = -1 - \log p_i. \]
Thus,
\begin{align} \label{dispersion_constant_simple}
	V(\bp,0) = \mathrm{Var} \{ \log p_i \}.
\end{align}
This is in agreement with the long known lossless dispersion result \cite{Strassen62_Asymptotische}.

\item \textbf{Difference distortion measure with low distortion}. Assume that \[d(x,\hat x) = d([x-\hat x] \bmod L) \triangleq d(z). \] Since we assumed that each source letter has positive probability, there exists some $D_0(\bp)>0$ s.t. for all $D\leq D_0$ the optimum backword channel is $x = \hat x + z$. The RDF is then given by
\begin{align} \label{difference_RDF}
R(\bp,D) = H(\bp) - H(\bw_z)
\end{align}
where $\bw_z$ is the maximum-entropy distribution such that $E\{d(z)\} \leq D$ \cite[Sec. 4.3.1]{Berger_RateDistortionBook}. Since this distribution is $D$-independent as long as $D < D_0(\bp)$, we have that the second term in \eqref{difference_RDF} is fixed in $\bp$ in a neighborhood of the source distribution. Consequently the derivatives only come from the first term, and \eqref{dispersion_constant_simple} holds for all $0\leq D < D_0$.

\item \textbf{Hamming distortion measure}. In this special case of a difference distortion measure, the optimum backward channel is modulo-additive also above $D_0$, where the modulo is taken over a reduced alphabet. Consequently, the dispersion is the variance of the logarithm of a normalized smaller-alphabet distribution.

\item \textbf{Zero dispersion}. The dispersion becomes zero when the source distribution maximizes the RDF over all possible source distributions among the input alphabet (thus the rate redundancy in Lemma~\ref{lem:RateRedundancy} is zero). Note that this is in agreement with the fact that for this case the excess-distortion exponent ``jumps'' from zero to infinity at zero excess rate. For difference measures, this happens if and only if the source is uniform, in agreement with the observation in \cite{Kontoyiannis2000}. However, in general $\bp$ need not be uniform.

\end{enumerate}

\section{Gaussian Source with Quadratic Distortion Measure}\label{sec:QG}

In this section we part with the assumption that the source is discrete. While the derivation of the excess distortion dispersion for general continuous-amplitude sources is left for future work, we solve the important special case of Gaussian source with MSE (quadratic) distortion measure.

Let the source $X$ be i.i.d. zero-mean Gaussian with variance $\sigma^2$. The distortion measure is given by: $d(x,y) = (x-y)^2$.
For $D\leq\sigma^2$, the quadratic-Gaussian RDF is given by:
\begin{align} \label{eqn:Gaussian_RDF}
R(\sigma^2,D) = \frac{1}{2} \log \left(\frac{\sigma^2}{D}\right).
\end{align}
In this case, the excess distortion exponent \eqref{eqn:Marton_exponent} is given by \cite{IharaKubo2000}:
\begin{align} \label{eqn:Gaussian_exponent}
F(R,\sigma^2,D) = \frac{1}{2}\left[\frac{D}{\sigma^2}e^{2R}-1-\log\left(\frac{D}{\sigma^2}e^{2R}\right)\right] = \frac{e^{2\Delta R}-1-2\Delta R}{2},
\end{align}
where $\Delta R = R - R(\sigma^2,D)$.

As in the finite alphabet case, we define $R_{\sigma^2,D,\eps}(n)$ to be the minimal code rate at length $n$ s.t. the excess distortion probability is at most $\eps$. From the excess distortion exponent \eqref{eqn:Gaussian_exponent} it follows that $R_{\sigma^2,D,\eps}(n) \ra R(\sigma^2,D)$ as $n\ra\infty$.

We are interested in the behavior of $R_{\sigma^2,D,\eps}(n)$ as $n$ grows. We show that the quadratic-Gaussian case behaves according to \eqref{eqn:R(D)dispersionGeneral} just like the finite-alphabet one. Recalling Proposition~\ref{prop:dispersion-exponent}, one expects the dispersion constant to be
\begin{equation*}
     V(\sigma^2,D) = \left[\left. \frac{\partial^2}{\partial R^2} F(R,\sigma^2,D) \right|_{R=R(\sigma^2,D)}\right]^{-1} = \frac{1}{2}.
\end{equation*}
It can also be shown that the value of $\frac{1}{2}$ can be obtained by a continuous version of \eqref{eqn:R(D)dispersion}.

We now show that this is the case indeed.

\begin{theorem}\label{thm:QG}
Let $\eps>0$ be a given excess distortion probability. Then the rate $R_{\sigma^2,D,\eps}(n)$ satisfies
\begin{equation}\label{eqn:QGdispersion}
    O\left(\frac{1}{n}\right) \leq R - R(\sigma^2,D) - \sqrt\frac{1}{2n} Q^{-1}(\eps) \leq \frac{5}{2n}\log n + O\left(\frac{1}{n}\right)
\end{equation}

\begin{proof}[Proof outline]
The proof is similar in spirit to the proof of Theorem \ref{thm:discrete}, where spheres take the part of types. The type class of types near the source distribution is analogous to a sphere with radius $r$, where $r^2$ is close to $n\sigma^2$.

For the achievability part, we define a ``typical'' sphere with radius $\sqrt{n\sigma^2(1+\alpha_n)}$ with $\alpha_n \ra 0$ as $n\ra\infty$. $\alpha_n$ is chosen s.t. the probability that the source falls outside the sphere is exactly $\eps$, so our code needs to $D$-cover the entire sphere. Note that the radius is just over the typical radius of the source. We use a sphere covering result by Rogers \cite[Theorem 3]{RogersSpheres1963}, and find a code that can $D$-cover the entire typical sphere with no more than $c n^{5/2} \left({\sigma^2(1+\alpha_n)/D}\right)^{n/2}$ reconstruction words for some constant $c$. By arguments similar to those used in the proof of Lemma \ref{lem:RateRedundancy} we get $\alpha_n = \sqrt{2/n} Q^{-1}\left(\eps \right) + O\left(\frac{1}{n}\right)$,
so the rate $R$ is bounded according to \eqref{eqn:QGdispersion}.

For the converse part, we follow the proof of the converse to the excess distortion exponent in \cite{IharaKubo2000}. We get that the excess distortion probability is lower bounded by the probability to leave a sphere that has a volume of $e^{nR}$ times the volume of a single $D$-ball around a reconstruction point. Again, using the Berry-Esseen theorem we connect excess distortion probability and the ratio of the radiuses, and get that the rate $R$ is lower bounded according to \eqref{eqn:QGdispersion}.

\end{proof}
\end{theorem}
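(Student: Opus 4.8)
The plan is to follow the template of the proof of Theorem~\ref{thm:discrete}, translating each ingredient from the combinatorial ``method of types'' setting into Euclidean geometry: a type class is replaced by an origin-centered Euclidean ball (equivalently, a thin shell), the Berger/Rogers-type covering of a type class is replaced by Rogers' sphere-covering theorem, and the central-limit computation of $\Delta R$ in Lemma~\ref{lem:RateRedundancy} is replaced by a Berry--Esseen estimate for the $\chi^2_n$ tail of $\norm{\bx}^2$. Throughout I use that $d(\bx,\hat\bx)=\tfrac1n\norm{\bx-\hat\bx}^2$, so a reconstruction word $D$-covers $\bx$ exactly when $\bx$ lies in the Euclidean ball of radius $\sqrt{nD}$ about it.

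\textbf{Achievability.} Fix $\alpha_n\ge 0$ (to be chosen) and let $S_n$ be the origin-centered ball of radius $\sqrt{n\sigma^2(1+\alpha_n)}$. I build $\C$ by invoking Rogers' theorem \cite[Theorem~3]{RogersSpheres1963} to cover $S_n$ with balls of radius $\sqrt{nD}$, which requires at most $c\, n^{5/2}\bigl(\sigma^2(1+\alpha_n)/D\bigr)^{n/2}$ of them for an absolute constant $c$. Declaring an excess-distortion event for every $\bx\notin S_n$ gives $\prob{\ED}\le \prob{\norm{\bx}^2> n\sigma^2(1+\alpha_n)}$, and I choose $\alpha_n$ so that the right-hand side equals $\eps$. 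Since $\norm{\bx}^2/\sigma^2$ is a sum of $n$ i.i.d.\ random variables with mean $1$, variance $2$ and finite third moment, Berry--Esseen gives $\prob{\norm{\bx}^2> n\sigma^2(1+\alpha_n)} = Q\bigl(\sqrt{n/2}\,\alpha_n\bigr)+O(1/\sqrt n)$ uniformly; inverting and using smoothness of $Q^{-1}$ at $\eps$ yields $\alpha_n=\sqrt{2/n}\,Q^{-1}(\eps)+O(1/n)$, just as in Lemma~\ref{lem:RateRedundancy}. The rate is then $R=\tfrac1n\log|\C| = R(\sigma^2,D)+\tfrac12\log(1+\alpha_n)+\tfrac{5}{2n}\log n+O(1/n)$, and since $\tfrac12\log(1+\alpha_n)=\sqrt{1/(2n)}\,Q^{-1}(\eps)+O(1/n)$ this is the upper bound in \eqref{eqn:QGdispersion}.

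\textbf{Converse.} Let $\C$ be any code with $|\C|=e^{nR}$ and $\prob{\ED}\le\eps$. The $D$-covered region is $\bigcup_{\hat\bx\in\C}B(\hat\bx,\sqrt{nD})$, whose volume is at most $e^{nR}$ times the volume of a single ball of radius $\sqrt{nD}$; hence it has the volume of the origin-centered ball of radius $r^*$ with $(r^*)^2=e^{2R}nD$. Because the Gaussian density is radially symmetric and nonincreasing, among all sets of a given volume the centered ball maximizes the Gaussian probability, so $\eps\ge \prob{\ED}\ge \prob{\norm{\bx}^2>e^{2R}nD}$. Writing $e^{2R}D/\sigma^2 = 1+\beta_n$ and applying Berry--Esseen to the same sum, $\eps\ge Q\bigl(\sqrt{n/2}\,\beta_n\bigr)-O(1/\sqrt n)$, whence $\beta_n\ge \sqrt{2/n}\,Q^{-1}(\eps)-O(1/n)$ and therefore $R-R(\sigma^2,D)=\tfrac12\log(1+\beta_n)\ge \sqrt{1/(2n)}\,Q^{-1}(\eps)-O(1/n)$, the lower bound in \eqref{eqn:QGdispersion}. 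This is exactly the converse of the Gaussian excess-distortion exponent of \cite{IharaKubo2000}, with the large-deviations estimate replaced by a central-limit one.

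\textbf{Main obstacle.} The only substantive external input is Rogers' covering bound; its polynomial overhead $n^{5/2}$ is precisely what produces the $\tfrac{5}{2n}\log n$ slack between the two sides of \eqref{eqn:QGdispersion}, and without such a bound one could not keep the achievable rate within $O(\log n/n)$ of the claimed value. The remainder is bookkeeping that must still be done with care: one verifies that the $O(1/\sqrt n)$ Berry--Esseen errors at the level of probabilities become $O(1/n)$ errors at the level of rates after dividing by $n$ and inverting $Q$ (legitimate since $Q^{-1}$ is Lipschitz near $\eps$), and that rounding $|\C|$ to an integer and passing from a shell to a solid ball cost only $O(1/n)$ in rate.
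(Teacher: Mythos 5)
Your proposal is correct and follows essentially the same route as the paper: the same typical sphere with Rogers' covering bound (yielding the $\tfrac{5}{2n}\log n$ term) plus a Berry--Esseen estimate for $\alpha_n$ in the achievability, and the same volume-comparison converse from the Ihara--Kubo exponent argument (which you make explicit via the radial rearrangement step) again combined with Berry--Esseen.
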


\section{Proof of the Rate Redundancy Lemma}\label{sec:RateRedundancyLemma}

\begin{proof}[Proof of Lemma \ref{lem:RateRedundancy}]
Let $\bx$ be a source word with type $P_\bx$, drawn from the source $\bp$. We prove the more general version of the lemma, with $\eps + g_n$ being the given probability.
The relation between $\eps$ and $\Delta R$ is given by
\begin{equation}\label{eqn:eps-DeltaR}
    \eps + g_n = \Pr\left\{R(P_\bx,D) > R_\bp(D) + \Delta R\right\}.
\end{equation}

By the regularity assumptions on $R(\bp,D)$, we use the Taylor approximation and write
\begin{equation}
    R(P_\bx,D) = R(\bp,D) + \sum_{i=1}^L (P_\bx(i) - p_i)R'(i) + \gamma(P_\bx,\bp),
\end{equation}
where $R'(\cdot)$ was defined in \eqref{eqn:RprimeDef}, and $\gamma(P_\bx,\bp)$ is the correction term for the approximation. Equation \eqref{eqn:eps-DeltaR} now becomes
\begin{equation}
    \eps + g_n = \Pr\left\{\sum_{i=1}^L (P_\bx(i) - p_i)R'(i) + \gamma(P_\bx,\bp) > \Delta R\right\}.
\end{equation}

By the Taylor approximation theorem, and by the assumption of finite second derivatives of $R(\bp,D)$, we have that the correction term $\gamma(P_\bx,\bp) = O(\|P_\bx-\bp\|^2)$. This means that there exists a constant $\eta$, s.t. for large enough $n$, $\gamma(P_\bx,\bp) < \eta\|P_\bx-\bp\|^2$. By Lemma \ref{lem:annoyingLemma} there exists $\Gamma = O(\log n / n)$ s.t. $\Pr\{\gamma(P_\bx,\bp) > \Gamma \} \leq \frac{2L}{n^2}$.

Using simple probability rules, for any random variables $A$ and $B$ and a constant $c$, we have that for any $\Gamma_1,\Gamma_2$ the following holds:
\begin{align}
    \prob{A+B > c} &\leq \prob{A > c-\Gamma_1} + \prob{B > \Gamma_1}\label{eqn:ElementaryUpper}\\
    \prob{A+B > c} &\geq \prob{A > c+\Gamma_2} - \prob{B < -\Gamma_2}\label{eqn:ElementaryLower}
\end{align}

In our case, we use \eqref{eqn:ElementaryUpper} (resp. \eqref{eqn:ElementaryLower}) to show the upper (resp. lower) bound on $\Delta R$.
By selecting $\Gamma_1 = \Gamma_2 = \Gamma$, we get
\begin{align}
    \eps + g_n &\leq \Pr\left\{\sum_{i=1}^L (P_\bx(i) - p_i)R'(i) > \Delta R - \Gamma \right\} +  O\left(\frac{1}{n^2}\right),\label{eqn:BeforeBerryEssen1}\\
    \eps + g_n &\geq \Pr\left\{\sum_{i=1}^L (P_\bx(i) - p_i)R'(i) > \Delta R + \Gamma \right\} -  O\left(\frac{1}{n^2}\right).\label{eqn:BeforeBerryEssen2}
\end{align}
Now consider the probability expression in \eqref{eqn:BeforeBerryEssen1}:
\begin{align*}
    \Pr\left\{\sum_{i=1}^L (P_{\bx}(i) - p_i)R'(i)>\Delta R-\Gamma\right\}  = \Pr\left\{\frac{1}{n}\sum_{k=1}^n R'(x_k) -  \sum_{i=1}^L p_iR'(i)>\Delta R-\Gamma\right\}.
\end{align*}
$\frac{1}{n}\sum_{k=1}^n R'(x_k)$ can be interpreted as an average of $n$ i.i.d. random variables $R'(X)$, whose expectation is given by $E[R'(X)]=\sum_{i=1}^L p_iR'(i)$. Their variance is given by $V(\bp,D)$, defined in \eqref{eqn:R(D)dispersion}. By the central limit theorem, the sum of i.i.d. random variables normalized by $\sqrt n$ converges to a Gaussian random variable as $n$ grows. Specifically, by the Berry-Esseen theorem (see, e.g. \cite[Ch. XVI.5]{Feller_1971}), we get
\begin{align}
    &\Pr\left\{\frac{1}{\sqrt n}\sum_{k=1}^n \left( R'(x_k) -  E[R'(X)]\right) >\sqrt n (\Delta R-\Gamma)\right\}\nonumber\\
    &= Q\left((\Delta R-\Gamma)\sqrt \frac{n}{V(\bp,D)}\right) \pm \frac{6\xi}{\sqrt n},
\end{align}
where $\xi = E\left[|R'(X)-E[R'(X)]|^3\right]$. By applying the same derivation $\Delta R + \Gamma$, \eqref{eqn:BeforeBerryEssen1} and \eqref{eqn:BeforeBerryEssen2} can be written together as
\begin{equation}
    \eps + O\left(\frac{\log n}{\sqrt n}\right) = Q\left((\Delta R \pm\Gamma )\sqrt\frac{n}{V(\bp,D)} \right).
\end{equation}
By the smoothness of $Q^{-1}(\cdot)$ around $\eps$ and the Taylor approximation we have
\begin{align}
    \Delta R = \sqrt \frac{V(\bp,D)}{n} Q^{-1}\left(\eps\right) + O\left(\frac{\log n}{n}\right),
\end{align}
as required.
\end{proof}

\let\oldbibliography\thebibliography
\renewcommand{\thebibliography}[1]{%
  \oldbibliography{#1}%
  \setlength{\itemsep}{.35em}%
}

\bibliographystyle{IEEEtran}
\bibliography{Master}

\begin{thebibliography}{10}
\providecommand{\url}[1]{#1}
\csname url@samestyle\endcsname
\providecommand{\newblock}{\relax}
\providecommand{\bibinfo}[2]{#2}
\providecommand{\BIBentrySTDinterwordspacing}{\spaceskip=0pt\relax}
\providecommand{\BIBentryALTinterwordstretchfactor}{4}
\providecommand{\BIBentryALTinterwordspacing}{\spaceskip=\fontdimen2\font plus
\BIBentryALTinterwordstretchfactor\fontdimen3\font minus
  \fontdimen4\font\relax}
\providecommand{\BIBforeignlanguage}[2]{{%
\expandafter\ifx\csname l@#1\endcsname\relax
\typeout{** WARNING: IEEEtran.bst: No hyphenation pattern has been}%
\typeout{** loaded for the language `#1'. Using the pattern for}%
\typeout{** the default language instead.}%
\else
\language=\csname l@#1\endcsname
\fi
#2}}
\providecommand{\BIBdecl}{\relax}
\BIBdecl

\bibitem{Berger_RateDistortionBook}
T.~Berger, \emph{Rate distortion theory; a mathematical basis for data
  compression}.\hskip 1em plus 0.5em minus 0.4em\relax Prentice-Hall Englewood
  Cliffs, N.J.,, 1971.

\bibitem{Marton1974fidelityCriterion}
K.~Marton, ``Error exponent for source coding with a fidelity criterion,''
  \emph{IEEE Trans. on Information Theory}, vol.~20, no.~2, pp. 197--199, mar
  1974.

\bibitem{GallagerInfoTheoryBook}
R.~G. Gallager, \emph{Information Theory and Reliable Communication}.\hskip 1em
  plus 0.5em minus 0.4em\relax New York, NY, USA: John Wiley \& Sons, Inc.,
  1968.

\bibitem{Strassen62_Asymptotische}
V.~Strassen, ``Asymptotische absch\"{a}tzungen in shannon's
  informationstheorie,'' \emph{Trans. Third Prague Conf. Information Theory,
  1962, Czechoslovak Academy of Sciences}, pp. 689--723.

\bibitem{PolyanskiyPVFiniteLength10}
Y.~Polyanskiy, H.~Poor, and S.~Verd\'u, ``Channel coding rate in the finite
  blocklength regime,'' \emph{IEEE Trans. on Information Theory}, vol.~56,
  no.~5, pp. 2307 --2359, May 2010.

\bibitem{YuSpeed93_semifaithful}
B.~Yu and T.~P. Speed, ``A rate of convergence result for a universal
  d-semifaithful code,'' \emph{IEEE Trans. on Information Theory}, vol.~39,
  no.~3, pp. 813 --820, may 1993.

\bibitem{ZhangYangWei97}
Z.~Zhang, E.-H. Yang, and V.~K. Wei, ``The redundancy of source coding with a
  fidelity criterion. 1. known statistics,'' \emph{IEEE Trans. on Information
  Theory}, vol.~43, no.~1, pp. 71 --91, jan 1997.

\bibitem{Kontoyiannis2000}
Y.~Kontoyiannis, ``Pointwise redundancy in lossy data compression and universal
  lossy data compression,'' \emph{IEEE Trans. on Information Theory}, vol.~46,
  pp. 136--152, 2000.

\bibitem{Csiszar81}
I.~Csisz\'ar and J.~Korner, \emph{Information Theory - Coding Theorems for
  Discrete Memoryless Systems}.\hskip 1em plus 0.5em minus 0.4em\relax New
  York: Academic Press, 1981.

\bibitem{IharaKubo2000}
S.~Ihara and M.~Kubo, ``Error exponent of coding for memoryless {G}aussian
  sources with a fidelity criterion,'' \emph{IEICE Transactions}, vol. 83-A,
  no.~10, pp. 1891--1897, 2000.

\bibitem{RogersSpheres1963}
C.~A. Rogers, ``Covering a sphere with spheres,'' \emph{Mathematika}, vol.~10,
  no.~02, pp. 157--164, 1963.

\bibitem{Feller_1971}
W.~Feller, \emph{An Introduction to Probability Theory and Its Applications,
  Volume 2 (2nd Edition)}.\hskip 1em plus 0.5em minus 0.4em\relax John Wiley \&
  Sons, 1971.

\end{thebibliography}

\end{document}